\definecolor{MyRed}{rgb}{0.6,0,0}
\definecolor{MyGreen}{rgb}{0,0.5,0.1}
\definecolor{MyBlue}{rgb}{0,0,0.6}
\definecolor{MyGrey}{rgb}{0.3,0.3,0.4}
\newtheorem{theo}{\color{black}Theorem}
\newtheorem{prop}{\color{black}Proposition}
\newtheorem{lemma}{\color{black}Lemma}
\newcommand{\theoremref}[1]{\hyperref[#1]{Theorem~\ref*{#1}}}
\newcommand{\obsref}[1]{\hyperref[#1]{Observation~\ref*{#1}}}
\newcommand{\lemmaref}[1]{\hyperref[#1]{Lemma~\ref*{#1}}}
\newcommand{\defref}[1]{\hyperref[#1]{Definition~\ref*{#1}}}
\newcommand{\corref}[1]{\hyperref[#1]{Corollary~\ref*{#1}}}
\newcommand{\propref}[1]{\hyperref[#1]{Proposition~\ref*{#1}}}
\newcommand{\assref}[1]{\hyperref[#1]{Assumption~\ref*{#1}}}
\newcommand{\remref}[1]{\hyperref[#1]{Remark~\ref*{#1}}}
\newcommand{\exref}[1]{\hyperref[#1]{Example~\ref*{#1}}}
\newcommand{\appref}[1]{\hyperref[#1]{Appendix~\ref*{#1}}}
\newcommand{\secref}[1]{\hyperref[#1]{Section~\ref*{#1}}}
\newcommand{\subsecref}[1]{\hyperref[#1]{Subsection~\ref*{#1}}}
\newcommand{\figref}[1]{\hyperref[#1]{Figure~\ref*{#1}}}
\newcommand{\footnoteref}[1]{\hyperref[#1]{footnote~\ref*{#1}}}
\newcommand\mydots{\hbox to .8em{.\hss.\hss.}}
\renewcommand\paragraph{\@startsection{paragraph}{4}{\z@}%
                                    {3.25ex \@plus1ex \@minus.2ex}%
                                    {-1em}%
                                    {\normalfont\normalsize\bfseries\color{MyBlue}}}
\newcommand{\blu}[1]{\textbf{\textcolor{MyBlue}{#1}}}
\pgfplotsset{compat=1.11}
\begin{document}

%\doublespacing
\onehalfspacing

\title{\blu{\textbf{The Limits of Citation Counts}} 
\thanks{I would like to thank Francis Bloch,  Denis Bouyssou, Ernesto Dal B\'o,  Thierry Marchant, Rafael Treibich and  Ludo Waltman for useful comments on this project, as well as seminar participants at the University of Paris 1 Panth\'eon-Sorbonne and conference participants at Network Science and Economics and Social Choice and Welfare. %All errors are mine.
}}

\author{Antonin \textsc{Mac\'{e}}\footnote{CNRS, Paris School of Economics and \'Ecole Normale Sup\'erieure-PSL.}}

\date{September 2023}
%\date{\today}

\maketitle \setcounter{page}{1}

\begin{abstract}
I study the measurement of scientists' influence using bibliographic data. The main result is an axiomatic characterization of the family of \emph{citation-counting indices}, a broad class of influence measures which includes the renowned $h$-index. The result highlights several limitations of these indices: they are not suitable to compare scientists across different fields, and they cannot account for indirect influence. I explore how these limitations can be overcome by using richer bibliographic information. (JEL: C43, D85)\\

Keywords: intellectual influence, citation indices, comparability across fields
\end{abstract}

\newpage

\section{Introduction}

Bibliographic data are increasingly used to evaluate research activities \citep{hicks2015bibliometrics,hamermesh2018citations}. In the case of individual scientists, citation indices such as the $h$-index \citep{Hirsch} or the Euclidean index \citep{Perry} are often employed to summarize a scholar's research portfolio in a single number. While many alternative citation indices have been developed, their key common characteristic is their parsimonious \emph{informational basis}. Most of these metrics are \emph{citation-counting indices}: they can be computed from the ordered list of citation counts of all papers by a given author. Relatedly, this list of citation counts is prominently displayed on an author's page on Google Scholar, a major institution of modern science routinely used to assess the influence of scientists.

This article aims to uncover the assumptions underlying the reduction of a scholar's bibliographic information to her list of citation counts. Is this informational basis adequate to measure the influence of scientists, and which limitations does it suffer from? To address these questions, I undertake an axiomatic analysis of \emph{influence measures} in a rich framework that encompasses information on authors, their papers and their citation links. The main result is a characterization of the class of \emph{citation-counting indices} by a set of independent properties. By laying down the theoretical foundations for this class of indices, the result helps to frame the discussion about their benefits and limitations, and to explore alternative measures using richer bibliographic data.

The analysis underscores that the substantive restrictions implied by citation-counting indices essentially boil down to two independent axioms. First, Citation Anonymity asserts that a scholar's influence should not depend on the specific identities of the papers citing her own papers. This informational constraint prevents \emph{indirect influence} to be accounted for. This is a severe restriction as indirect influence plays a major role in scientific research due to its cumulative nature \citep{Scotchmer}. For instance, accounting for indirect influence has proved instrumental in identifying influential papers by Nobel Prize-winning physicists \citep{Klosik}.

Second, Author Anonymity precludes a scholar's influence to depend on the authorship of papers written by other scholars. I argue that this axiom is the main culprit behind the recognized inability of citation-counting indices to compare scholars across different scientific fields. To show that, I formalize a property of Field Comparability -- a central yet often implicit desiderata in the bibliometric literature (see below) -- which requires the average influence of scientists to be equal in two independent fields. Alas, this property is essentially incompatible with Author Anonymity. This implies that \emph{citation-counting indices} are inadequate to compare scholars across fields. Finally, I characterize a class of field-comparable indices and explain how simple indices within this class can be constructed.

After discussing the relevant literature in \secref{sLiterature}, the model is introduced in \secref{sModel}. The characterization of citation-counting indices is shown and  discussed in \secref{sCCI}. \secref{sComparison} focuses on the ability of influence measures to compare scientists across fields.  I wrap up with a series of concluding remarks in \secref{sConclusion}.

\subsection{Related literature}\label{sLiterature}

The present article contributes to the literature on the measurement of influence in science -- spurred by the advent of extensive bibliographic databases in the 1960s and 1970s. While initial works addressed the measurement of the influence of academic journals \citep{Garfield, Pinski}, a more recent literature focuses on measuring the influence of scientists, following the introduction of the $h$-index by \citet{Hirsch}. Much of this literature is devoted to the development and analysis of various citation-counting indices, proposed as alternatives to the $h$-index. The axiomatic method has been used to characterize some of these indices, such as the $h$-index \citep{Woeginger, Marchant}, the $g$-index \citep{woeginger2008axiomatic}, the Euclidean index \citep{Perry}, the $\chi$-index \citep{levene2019characterisation}, the class of step-based indices \citep{Chambers} and the class of measure-based indices \citep{de2018ranking}.\footnote{See also \cite{bouyssou2014axiomatic} for a series of characterizations of various citation-counting indices. A distinct approach to evaluate citation indices is to explore how they align with labor market outcomes, see for instance \citet{ellison2013does} for the case of economics.} In all these works, each index (or class of indices) is axiomatized within the class of \emph{citation-counting indices}. The present article thus complements this axiomatic literature by providing a characterization of the very framework in which they are set. In turn, by combining this characterization to that of any index in that literature, one obtains a new characterization of this index, in a richer framework.

\begin{sloppypar}
A notable exception to the previously mentioned literature is \citet{Bouyssou}, who characterize the fractional citation count\footnote{The fractional citation count is very close to be a citation-counting index, in a sense that is made clear in \secref{sCCI}: it satisfies all the axioms that characterize citation-counting indices but the Splitting axiom.} within the broader class of influence measures for authors defined on bibliographic databases. While the present analysis operates in a similar framework, it diverges by characterizing the entire class of citation-counting indices -- encompassing  most indices used in practice -- rather than focusing on a particular index. 
\end{sloppypar}

Several studies emphasize the importance of accounting for indirect influence. \citet{Palacios2004} propose the \emph{invariant method} -- an influence measure for journals where citations carry more weight if they stem from endogenously influential journals. %\footnote{An elegant variation on this method is the handicap method proposed by \citet{Demange}, for which citations from journals citing (endogenously) influential journals weighs relatively more.} 
 Recursive methods akin to the PageRank algorithm \citep{brin1998anatomy} have also been suggested to measure scientists' influence \citep{Radicchi2009, West}. However, such measures are biased as they only account for indirect influence at the author level rather than at the paper level \citep{Wang}. I discuss in \secref{sConclusion} how to construct a measure accounting for indirect influence that does not suffer from this bias.

% Cite \citep{Chen}: indirect influence at the paper level ?

\begin{sloppypar}
As common influence measures vary significantly across fields, a pervasive theme of the bibliometric literature is the idea of field-normalization \citep{Ioannidis,waltman2019field}. For instance, \citet{Radicchi2008} argue that appropriately normalizing citations across diverse scientific fields yields a \emph{universal} distribution of normalized citations per paper. Building on this work, \citet{Perry} introduce the \emph{homogeneity} property for citation-counting indices, ensuring that re-scaling citations across fields preserves the relative ranking of scientists within a given field. Overall, field-normalization emerges as a consensual recommendation from the bibliometric community \citep{hicks2015bibliometrics}. With regard to this literature, the present article seeks to clarify the notion of comparability across fields and  to assess when and why citation indices may be comparable (see \secref{sComparison} for a detailed discussion).
\end{sloppypar}

\section{Model}\label{sModel}

\blu{Authors and papers}. The set of authors is denoted by $\mathcal{A}$, and the number of authors is $A$. Each author $a\in \mathcal{A}$ has written a set of papers $\mathcal{P}_a$, containing $P_a$ papers. The set of all papers is denoted by $\mathcal{P}=\cup_{a\in \mathcal{A}}\mathcal{P}_a$. The collection of the sets of papers written by each author is denoted by $\mathcal{P}_{\mathcal{A}}=(\mathcal{P}_a)_{a\in \mathcal{A}}$.

\noindent \blu{Citations and references}. Papers in $\mathcal{P}$ are related through a directed network $n: \mathcal{P}\times\mathcal{P} \rightarrow \{0,1\}$, where $n(p,q)=1$ means that $q$ cites $p$, a relation which we will often interpret as an influence of $p$ on $q$. For a given paper $q\in \mathcal{P}$, the set of references is $\mathcal{R}_q=\{p\in \mathcal{P}, n(p,q)=1\}$, and its number of references is $R_q$. For a given paper $p\in \mathcal{P}$, the set of citations is $\mathcal{C}_p=\{q\in \mathcal{P}, n(p,q)=1\}$, and its number of citations is $C_p$.

\noindent \blu{Bibliographic databases}. A database is a collection $d=(\mathcal{A}, \mathcal{P}_{\mathcal{A}},n)$. For the clarity of the exposition, I focus in this article on the set of databases $\mathbb{D}$ such that each paper is single-authored, there is no self-citation, and each author cites at least one other author. Formally:
$$\mathbb{D}=\left\{d=(\mathcal{A}, \mathcal{P}_{\mathcal{A}},n) \quad, \quad 
\begin{array}{ll}
\forall a,b\in\mathcal{A}, a\neq b, & \mathcal{P}_a\cap \mathcal{P}_b=\emptyset \\
\forall a\in \mathcal{A},\forall p,q\in \mathcal{P}_a, & n(p,q)=0 \\
\forall a\in \mathcal{A}, \exists q\in \mathcal{P}_a, \exists p\in \mathcal{P} \backslash \mathcal{P}_a, & n(p,q)=1 
\end{array}
\right\}.
$$
These assumptions are discussed in \secref{sConclusion}. 

\noindent \blu{Influence measures}. An \emph{influence measure (or index) for authors} $f$ assigns to any database $d=(\mathcal{A}, \mathcal{P}_{\mathcal{A}},n)\in \mathbb{D}$ a vector of scores $f(d)=(f_a(d))_{a\in \mathcal{A}}\in (\mathbb{R_+})^{\mathcal{A}}$. The number $f_a(d)$ is the influence of author $a$ in the database $d$, as measured by the index $f$.\footnote{Note that we consider here a \emph{cardinal} measure for the simplicity of exposition. A characterization result similar to \theoremref{tMain} can be obtained for an \emph{ordinal} measure (assigning an ordering on $\mathcal{A}$ to each database $d$), by adapting each axiom to the ordinal setting (proof available upon request). }

\noindent \blu{Neutral measures}. A measure $f$ is \emph{neutral} if the allocation of influence to authors is independent of their names and of the names of the papers. Formally, neutrality requires that for any bijection of authors $\pi :\mathcal{A}\rightarrow \mathcal{A}'$, for any bijection of papers $\sigma: \mathcal{P}\rightarrow \mathcal{P}'$, if $d=(\mathcal{A}, \mathcal{P}_{\mathcal{A}},n)$, if $d'=(\mathcal{A}', \mathcal{P}'_{\mathcal{A}},n')$ with $\mathcal{P}'_{\mathcal{A}}=\left(\sigma(\mathcal{P}_{\pi^{-1}(a)}) \right)_{a\in \mathcal{A}'}$ and $n'\left(\sigma(p),\sigma(q) \right)=n(p,q)$ for all $p,q\in \mathcal{P}$, then $f_{\pi(a)}(d')=f_a(d)$ for all $a\in \mathcal{A}$. Neutrality is a natural requirement that will be satisfied by all measures considered in the article.

\noindent \textbf{Example 1: the Euclidean index}. This index, introduced  by \citet{Perry}, is defined by:%\footnote{To be precise, \citet{Perry} characterize the Euclidean index as an ordinal measure. The formula exhibited here coincides with their recommended cardinalization of the ordinal measure.}
$$f_a(d)=\left[\sum_{p\in \mathcal{P}_a}\left(\sum_{q\in \mathcal{P}} n(p,q)\right)^2\right]^{1/2} = \left[\sum_{p\in \mathcal{P}_a}(C_p)^2\right]^{1/2} .$$
This index belongs to the class of \emph{citation-counting indices} (see below) as the influence of an author $a$ only depends on the list of her citation counts $(C_p)_{p\in \mathcal{P}_a}$. As this index is homogeneous (of degree $1$), it is not comparable across fields. To see this point, consider a database split into two distinct fields with the same number of authors. Assume that for every author in the second field, there is a corresponding author in the first field with twice as many citations for each paper. The average index in the first field will then be twice as high as in the second field, while Field Comparability requires these averages to be equal  (see discussion on \secref{sComparison}). Furthermore, the Euclidean index neglects indirect influence, as it exclusively relies on direct citations.

\noindent \textbf{Example 2: a comparable measure (\emph{intellectual debt}):} 
$$f_a(d)=\sum_{p\in \mathcal{P}_a} \sum_{b\in \mathcal{A}}\dfrac{1}{P_b} \sum_{q\in \mathcal{P}_b} \dfrac{1}{R_q}n(p,q).$$
The measure can be interpreted as follows: (i) each author holds one unit of intellectual debt to the broader scientific community, (ii) this debt is then evenly distributed among her papers, and within each paper, it is split equally among its references, (iii) the score of an author coincides with the total credit she owns in the scientific community.

Observe that this index is comparable across fields. Indeed, it satisfies the following accounting equation:
\begin{equation} \label{eAccounting}
\sum_{a\in \mathcal{A}}f_a(d)= A. 
\end{equation}
Essentially, the cumulative score in the database, by its very definition, represents the total intellectual debt, which equates to the number of authors.\footnote{Formally: $\displaystyle{\sum_{a\in \mathcal{A}} f_a(d)=\sum_{p\in \mathcal{P}} \sum_{b\in \mathcal{A}}\frac{1}{P_b}\sum_{q\in \mathcal{P}_b}\dfrac{1}{R_q}n(p,q)=\sum_{b\in \mathcal{A}} \frac{1}{P_b}\sum_{q\in \mathcal{P}_b} \left(\sum_{p\in \mathcal{P}}\dfrac{1}{R_q}n(p,q)\right)=A.}$} Therefore, if the database is divided in two disjoint fields, the average influence will be the same in the two fields.   As the Euclidean index, this measure is a simple index that only depends on direct citations (see \secref{sConclusion} for an extension of the measure accounting for indirect influence).

\noindent \textbf{Example 3: a comprehensive measure:}
$$f_a(d)=\sum_{p\in \mathcal{P}_a} \sum_{q\in \mathcal{P}} \sum_{r\in \mathcal{P}} n(p,q)n(q,r).$$
As the Euclidean index, this measure is a simple index that is not comparable across fields. However, the index accounts for some indirect influence, as $f_a(d)$ represents the cumulative citations garnered by papers citing $a$'s papers.%\footnote{Note that with this crude measure, if a paper cites two of $a$'s papers, it is counted twice.} 

%In the sequel, I show in \secref{sCCI} that the inability of the Euclidean index to allow for comparisons across fields, and to account for indirect influence extends to all indices that ``count citations''. Moreover, these two defects of citation-counting indices are independent, and each one can be dropped independently of the other, as hinted in Examples 2 and 3. Then, I define a new measure in \secref{sMeasure} that drops these two deficiencies at the same time, and I carefully justify the construction of the measure.

\section{Citations-counting indices}\label{sCCI}

Following the introduction of the $h$-index by \citet{Hirsch}, numerous other indices that ``count citations'' have been proposed. The popularity of these measures likely stems from their parsimonious use of information: an author is identified with the collection of the citation counts of each of her papers. This collection can be viewed as a multiset on $\mathbb{N}$, i.e. a function $m_a[d]:\mathbb{N}\rightarrow \mathbb{N}$, defined by $\forall k\in \mathbb{N},\ m_a[d](k)=\#\{p\in \mathcal{P}_a, C_p=k\}$. The number $m_a[d](k)$ represents the number of papers written by $a$ that have received exactly $k$ citations in the database $d$. The set of all multisets on $\mathbb{N}$ is denoted by $\mathcal{M}$. 

\noindent \blu{Citation-counting indices}. An influence measure $f$ is a \emph{citation-counting index} if there exists a function $F: \mathcal{M}\rightarrow \mathbb{R}$ such that: $\forall d\in \mathbb{D}, \quad f_a(d)=F\left(m_a[d]\right).$

Citation-counting indices include the $h$-index, the Euclidean index, and all other indices mentioned in \secref{sLiterature}, with the exception of the fractional citation count.

\subsection{Axioms}

This section introduces five axioms on influence measures. For the first axiom, we consider two databases $d,d'\in \mathbb{D}$ that do not overlap, i.e. such that $\mathcal{A}\cap\mathcal{A}'=\emptyset$ and $\mathcal{P}\cap\mathcal{P}'=\emptyset$. In that case, we note $d'' = d \oplus d'$ the union of the two disjoint databases, defined by $d''=(\mathcal{A}'', \mathcal{P}_{\mathcal{A}''}'',n'')$, where  $\mathcal{A}''=\mathcal{A}\cup \mathcal{A}'$, $\mathcal{P}_{\mathcal{A}}''=\left( (\mathcal{P}_a)_{a\in \mathcal{A} },(\mathcal{P}'_a)_{a\in \mathcal{A}' } \right)$ and $n''=n\mathds{1}_{\mathcal{P}\times \mathcal{P}} + n'\mathds{1}_{\mathcal{P}'\times \mathcal{P}'}$.\footnote{Note that with this definition, $d''\in \mathbb{D}$. Note also that we use the notation $\mathds{1}_{\mathcal{P}^0\times\mathcal{P}^1}$ to denote the network  $n$ such that $n(p,q)=1$ if and only if $(p,q)\in \mathcal{P}^0\times\mathcal{P}^1$.} The axiom requires that the addition of an unrelated database leaves the score of any author unchanged.

\noindent \blu{Separability}. For any two disjoint databases $d,d'\in \mathbb{D}$, we have:
 $$\forall a\in \mathcal{A},\quad f_a(d\oplus d')=f_a(d).$$
The following property requires that the references of an author do not affect her score. Note that, while this property is desirable for a fixed database, it also prevents an author from increasing her score by manipulating her references.\footnote{In the context of the measurement of influence for journals, \citet{Koczy} observe that the invariant method is vulnerable to precisely  this type of manipulation.}  

\noindent \blu{Reference Independence}. For any author $a\in \mathcal{A}$, we have:%\footnote{The formula uses the notation $\mathds{1}_{p,q}(p',q')=\mathds{1}_{\{p'=p\}}\mathds{1}_{\{q'=q\}}.$}
$$\forall q\in \mathcal{P}_a, \forall p\in \mathcal{P} \backslash \mathcal{P}_a, \quad n(p,q)=0 \quad \Rightarrow \quad f_a(d'=(\mathcal{A}, \mathcal{P}_{\mathcal{A}},n+\mathds{1}_{\{(p,q)\}})) = f_a(d).$$
The next axiom states that splitting an uncited paper into two papers with disjoint reference lists should leave the score of any author unaffected. This property limits the possibility of normalizing the source of citations.
  
\noindent \blu{Splitting}. Let $b\in \mathcal{A}$ and $q\in \mathcal{P}_b$ such that $C_q=0$. If $\mathcal{P}_b'=\mathcal{P}_b\cup \{q'\}$, with $q'\notin \mathcal{P}$, and $n'$ is such that $\mathcal{R}'_q \cup \mathcal{R}'_{q'}=\mathcal{R}_q$, $\mathcal{R}'_q \cap \mathcal{R}'_{q'}=\emptyset$ and $\forall r\neq q,   \mathcal{R}'_r=\mathcal{R}_r$ and $\forall a \neq b, \mathcal{P}_a'=\mathcal{P}_a$, then:
$$\forall a\in \mathcal{A},\qquad f_a(d'=(\mathcal{A}, \mathcal{P}_{\mathcal{A}}',n'))=f_a(d).$$
The following axiom requires the score of an author to be independent of the specific identity of its citations. As we shall discuss in \secref{sResult}, this property prevents the measure from capturing indirect influence at the level of each paper.

\noindent \blu{Citation Anonymity}. For any permutation $\sigma:\mathcal{P}\rightarrow \mathcal{P}$ such that $\forall a \in \mathcal{A}, \sigma(\mathcal{P}_a)=\mathcal{P}_a$, if $\forall p,q\in \mathcal{P}, n'(p,q)=n(p,\sigma(q))$, then:
$$\forall a\in \mathcal{A},\qquad f_a(d'=(\mathcal{A}, \mathcal{P}_{\mathcal{A}},n'))=f_a(d).$$
The last axiom of this section requires the score of an author $a$ to be independent of the authorship of papers not written by $a$. As we shall discuss in \secref{sComparison}, this property essentially prevents the measure to be comparable across fields.

\noindent \blu{Author Anonymity}. For any two databases $d=(\mathcal{A}, \mathcal{P}_{\mathcal{A}},n)$ and $d'=(\mathcal{A}, \mathcal{P}_{\mathcal{A}}',n)$ in $\mathbb{D}$ such that $\mathcal{P}=\mathcal{P}'$, for any $a$ such that $\mathcal{P}_a=\mathcal{P}'_a$, we have:
$$f_a\left(d'=(\mathcal{A}, \mathcal{P}_{\mathcal{A}}',n)\right)=f_a(d).$$

\subsection{Result and discussion}\label{sResult} 
The main result of this section is the following characterization of citation-counting indices.

\begin{theo}\label{tMain}
A neutral influence measure $f$ satisfies Separability, Reference Independence, Splitting, Citation Anonymity and Author Anonymity if and only if $f$ is a citation-counting index. Moreover, the 5 axioms are independent.
\end{theo}

\theoremref{tMain} clarifies the assumptions underlying the use of citation-counting indices, such as the $h$-index or the Euclidean index, to measure the influence of authors in a database. On the one hand, such an index satisfies two properties that seem particularly appealing. Separability means that an author's score does not depend on  bibliographic information on a field distinct from the author's field. Reference Independence implies that an author's references do not contribute to the assessment of her own influence. On the other hand, citation-counting indices combine three other independent properties that might be too restrictive for measuring the influence of scientists.

The first restrictive assumption, and perhaps the more benign one, is Splitting. Because splitting an uncited paper does not alter any author's index, this property suggests that papers with more references will be relatively more important when attributing credit to authors. For instance, a paper with 50 references would be equivalent to 10 papers with 5 references each. This contradicts the idea of source-normalization \citep{waltman2019field}, according to which each source of citations must account for the same level of influence. As an example, Splitting is violated by the fractional citation count \citep{Bouyssou}.\footnote{Fractional citation count: $\displaystyle{f_a(d)=\sum_{p\in \mathcal{P}_a}\sum_{q\in \mathcal{P}}\dfrac{1}{R_q}n(p,q)}$.\label{fnFCC}} 

The second restrictive axiom is Citation Anonymity. When it holds, indirect influence between papers cannot be taken into account by the index. Consider for instance a paper $p$, cited by a paper $q$, which is itself cited by a paper $r$. By Citation Anonymity, one can permute the papers citing $p$ without altering the index. It follows that the statement ``$p$ is indirectly cited by $r$'' is not a meaningful statement for the computation of the index.  This limitation is important, as it implies that a citation from an uncited paper holds the same weight as one from a paper with 100 citations.

Note that, in all rigor, Citation Anonymity still permits to account for some indirect influence, at the author level. Indeed, the permutation of papers used to define the axiom respects the partition of papers into authors, and therefore preserves the global flow of citations between authors. This allows to use network-based methods akin to the PageRank algorithm to measure the influence of authors recursively \citep{Radicchi2009, West}. However, these procedures inevitably lead to biases in the assessment of indirect influence. Consider for instance an author $a$, with two papers, one that is uncited and one that is highly cited. If author $b$ gets a citation from $a$, she will gain a lot of influence in a recursive index (since $a$ is a highly cited author), even if she is cited by the uncited paper. Assessing indirect influence at the author level therefore seems too crude if one has access to the citation network between papers \citep{Wang}.

The last restrictive property is Author Anonymity. It requires that the authorship of  papers not written by $a$ does not affect the influence of $a$. As an illustration, a paper with 100 citations will be judged equally influential whether these citations all come from the same author or from 100 different authors. This seems disputable from a normative perspective, and also raises the issue of the possibility of manipulating the measure.\footnote{Here, I refer to a potential manipulation by an author who could add papers at no (or a very small) cost. The remark implicitly assumes that papers are easier to manipulate (to counterfeit) than authors.} But more importantly, this single axiom prevents the influence measure to be comparable across fields, as discussed in the next section.

\section{Comparisons of authors across fields}\label{sComparison}

%The idea that an influence index should be comparable across different fields of science is pervasive in the literature on bibliometrics. Nevertheless, common citation-counting indices are known to produce significantly different results across fields, 

Common citation indices are known to produce significant differences across fields, 
because of their varying sizes and of their different traditions in terms of publications and citations. As a result, there seems to be a consensus that citations should be normalized by field, so as to allow for meaningful comparisons across fields \citep{Ioannidis,waltman2019field}. For instance, \citet{Perry} propose to divide the number of citations of a paper by the average number of citations per paper in the paper's field, before computing the Euclidean index.\footnote{It has also been suggested to normalize a given index by field \citep{Kaur}. \citet{Perry} provide a discussion of the relative merits of the two approaches. See \cite{waltman2019field} for a comprehensive survey of this literature.}  In this section, I argue that field-normalization leads to serious conceptual difficulties. I thus propose a weaker notion of Field Comparability. I show that this property is incompatible with Author Anonymity, and is therefore violated by any citation-counting index, under mild conditions. I conclude the section by a characterization of field-comparable indices.

There are at least three important issues raised by the procedure of field-normalization. First, field-normalization \emph{creates biases} between interrelated fields. Indeed, even if most citations are issued within fields, there are important flows of citations across fields, and these flows need not be balanced. For instance, \citet{angrist2020inside} report substantial and unbalanced flows of citations in recent decades across fields in the social sciences.\footnote{In Figure 2 of \citet{angrist2020inside}, one observes that, in 2010, less than $3\% $ of citations from economics journals are given to political science journals, while almost $15\% $ of citations from political science journals are given to economics journals. Note that, in this figure, citations from a given journal are weighted by the importance of the journal within the journal's field.} In such circumstances, field-normalization underweighs the influence of influential fields, and overweighs the influence of fields ``under influence''.

Second, field-normalization is sensitive to the level of aggregation retained to perform the normalization. Consider two authors who have the same list of citations counts. Suppose that the first is an economic theorist, while the second is a graph theorist. As papers are more cited in economics than in mathematics, it seems that the second author should be considered more influential, if one adheres to the field-normalization paradigm. At the same time, if one observes that papers are more cited in graph theory than in economic theory, one must also conclude that the first author is more influential than the second one. It follows that field-normalization is an ambiguous notion, that crucially depends on the level at which the normalization is performed \citep{Zitt2005}.

A third issue relates to the difficulty of reaching a consensual field classification system \citep{waltman2019field}. Different methodologies, whether algorithmic or expert-driven, can lead to distinct field classification outcomes. The arbitrariness of the retained classification system is most acute when a scholar lies at the intersection of two fields, say, one with a high citation rate and one with a low citation rate. The measured influence of such scholar is then highly dependent of the field assigned to her by the classification system.

%The third difficulty concerns the categorical definition of fields. Depending on the database under consideration, the partition of papers into fields can be obtained either exogeneously by an external observer, or endogenously from the database, thanks to suitable network methods \citep{Copic}.\footnote{A common strategy is to obtain a categorization of journals into fields with either method, and then to deduce the field of a paper from the journal where it has been published.} In any case, the problem arises that many authors and papers lie at the intersection of different fields. This difficulty seems to limit the applicability of field-normalization at the level of the whole database.\footnote{For instance, \citet{Perry} restrict their empirical analysis to economists working in a single field.}

I have argued here that, while the \emph{objective} of field-normalization is consensual, the \emph{procedure} of field-normalization raises three important issues that limit the relevance of its outcomes. To distinguish between the two, I now formalize a minimal requirement (on influence measures) that captures the \emph{objective} of field-normalization, without imposing restrictions when the \emph{procedure} of field-normalization is problematic. The following axiom requires the average influence to be equal for two different fields of science, \emph{only in the extreme case in which the two fields are completely disjoint}.\footnote{A similar idea was developed by \citet{Waltman2013}, who observed that the SNIP index - an impact indicator for journals - will have the same average for two disjoint fields of science, under a couple of benign conditions. Relatedly, a slightly stronger property - \emph{insensitivity to field differences} - has been proposed for journals impact indicators in \citet{Waltman2010}.}

%To avoid the pitfalls associated with field-normalization, I propose a weaker property of field comparability.

\noindent \blu{Field Comparability}. If $d\in \mathbb{D}$ is divided in two disjoint fields, i.e. there is a partition $\mathcal{A}=\mathcal{A}^1\cup \mathcal{A}^2$ such that for all $(p,q)\in (\cup_{a\in \mathcal{A}^1 }\mathcal{P}_a) \times (\cup_{a\in \mathcal{A}^2 }\mathcal{P}_a) $, $n(p,q)=n(q,p)=0$. Then:
$$\dfrac{1}{A^1}\sum_{a\in \mathcal{A}^1}f_a(d)=\dfrac{1}{A^2}\sum_{a\in \mathcal{A}^2}f_a(d).$$
Note that an influence measure can satisfy the property, while still accounting for global citation flows between fields. This addresses the first issue. Furthermore, the definition of fields in this axiom is unambiguous, this addresses the second and third issues.

In the sequel, I prove the incompatibility between this property and Author Anonymity under mild conditions. We say that  $f$ is non-degenerate if there is a database $d$ and an author $a$ such that $f_a(d)>0$, and that $f$ is separable if it satisfies Separability. The following axiom further requires that authors with no citation have no influence.

\noindent \blu{Null Author}.\footnote{The terminology follows the \emph{Null Player} property from the theory of cooperative games \citep{Shapley}.} Let $d\in \mathbb{D}$ and $a\in \mathcal{A}$. We have $  \forall p\in \mathcal{P}_a, \ C_p=0 \ \Rightarrow \ f_a(d)=0$.

All common citation indices are non-degenerate, separable and satisfy this property. We obtain the following impossibility result.

\begin{prop}\label{pImpossibility}
There is no non-degenerate and separable influence measure satisfying Null Author, Author Anonymity and Field Comparability.
\end{prop}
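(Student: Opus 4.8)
The plan is to derive a contradiction by constructing a specific database that can be partitioned into two disjoint fields in two different ways, and then exploiting Author Anonymity to relate the field averages. The key tension is that Field Comparability fixes the ratio of average influence between disjoint fields, while Author Anonymity says the influence of an author depends only on her own papers and the global citation network $n$ (not on how the \emph{other} papers are distributed among authors). So I would engineer two fields of different sizes, or with different internal citation structures, such that the constraints imposed by Field Comparability become mutually inconsistent once Author Anonymity forces certain scores to be equal.

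\medskip

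First I would build a minimal database $d\in\mathbb{D}$ consisting of two disjoint fields $\mathcal{A}^1,\mathcal{A}^2$ with $A^1=A^2$, arranged so that, by Null Author, all the influence is concentrated in a controllable way. Concretely, I would take each field to contain authors whose papers are either cited or uncited, with citations living entirely inside each field (so the disjointness hypothesis of Field Comparability holds). Field Comparability then forces $\sum_{a\in\mathcal{A}^1}I[d](a)=\sum_{a\in\mathcal{A}^2}I[d](a)$. The trick is to choose the two fields so that they have the \emph{same} underlying set of papers $\mathcal{P}$ and the \emph{same} citation network $n$ restricted appropriately, but differ in how papers are grouped into authors — this is exactly the degree of freedom Author Anonymity constrains. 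By Author Anonymity, an author's score is insensitive to the author-assignment of papers she does not own, so I can move papers between the ``other'' authors without changing a fixed author's score, and thereby force incompatible equalities between the two field totals.

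\medskip

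The cleanest route I would try: design a single database where the \emph{same} family of authors and papers admits two valid field partitions, and where Author Anonymity lets me transport the influence computed in one partition to the other. Combined with Null Author (to guarantee the scores are not all zero, so the averages are genuinely pinned down and positive where a cited paper exists), Field Comparability applied to both partitions yields two equations whose only common solution would require some author with a cited paper to have zero influence — contradicting Null Author — or would equate averages that differ by construction. I would make the fields asymmetric in the number of authors receiving positive scores, so the per-author averages cannot simultaneously match.

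\medskip

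The main obstacle I anticipate is arranging the database so that all three hypotheses are simultaneously \emph{usable}: the partition must be genuinely disjoint (no cross-field citations) for Field Comparability to apply, yet the author-reassignment invoked by Author Anonymity must keep the database inside $\mathbb{D}$ (single-authored papers, no self-citation, and crucially the constraint that every author cites at least one outside author). This last membership condition is the delicate part, since redistributing papers among authors can create an author all of whose papers cite only her own former co-papers, violating the third defining condition of $\mathbb{D}$. I would therefore pad each field with an auxiliary ``citing'' structure ensuring the $\mathbb{D}$-membership survives every reassignment, and verify that Null Author still pins down which authors have strictly positive influence. Once the bookkeeping is set up so that Author Anonymity legitimately equates the relevant scores across the two configurations, the contradiction with Field Comparability should fall out immediately.
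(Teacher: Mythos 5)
Your strategy is in fact the paper's own: keep the set of authors, the set of papers and the network fixed; use Author Anonymity to reassign the papers of ``null'' (uncited) authors; and play two Field Comparability equations against each other, with Null Author supplying the positivity that makes them inconsistent. You also correctly flag the one genuinely delicate feasibility issue, namely that every reassignment must keep the database inside $\mathbb{D}$. However, as written the proposal has a gap. The contradiction is asserted rather than derived --- no database and no reassignment are ever exhibited --- and the one concrete structural prescription you do give, making ``the fields asymmetric in the number of authors receiving positive scores,'' is not the operative mechanism and would not by itself produce any inconsistency: Field Comparability equates sums divided by \emph{total} field cardinalities, so two fields with equal cardinalities but different numbers of positive-score authors satisfy it without tension. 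What the reassignment must accomplish is a change of the \emph{denominators}: it must shuttle a zero-score author from one field to the other, while every positive-score author keeps exactly her own papers, so that Author Anonymity freezes the numerators.

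The paper's construction shows how little is needed, and that your worry about ``padding'' auxiliary structure is unnecessary if the null authors are themselves the citing authors. Field $1$: authors $a,b$ whose single papers cite each other, plus authors $c$ and $e$, each with one uncited paper citing $b$'s paper. Field $2$: authors $x,y$ whose single papers cite each other, plus author $z$ with two uncited papers, each citing $y$'s paper. The fields are disjoint, with $4$ and $3$ authors; Null Author gives $I[d](c)=I[d](e)=I[d](z)=0$ and makes $S_1=I[d](a)+I[d](b)$ and $S_2=I[d](x)+I[d](y)$ strictly positive, so Field Comparability yields $S_1/4=S_2/3$. Now reassign: $c$ takes $e$'s paper, and $e$ takes one of $z$'s papers. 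Every reallocated paper cites a paper of a \emph{different} author ($b$'s or $y$'s), so the new database is still in $\mathbb{D}$ with no extra scaffolding, and the fields become $\{a,b,c\}$ and $\{x,y,z,e\}$ with $3$ and $4$ authors. Author Anonymity leaves the scores of $a,b,x,y$ untouched, so Field Comparability now yields $S_1/3=S_2/4$. Together the two equations force $S_1=S_2=0$, contradicting positivity. This explicit bookkeeping is not a routine afterthought to be ``set up'' later; it is the entire content of the proof, since the impossibility rests on showing that all three axioms can be simultaneously invoked on one concrete configuration.
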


As a corollary of \propref{pImpossibility} and \theoremref{tMain}, we obtain that citation-counting indices cannot be field-comparable, unless they violate benign conditions. While field-normalization can still be applied before or after index computation, we have argued that this procedure is inherently arbitrary and potentially biased.

We conclude the section by a characterization of the class of (separable and neutral) influence measures that are field-comparable.

\begin{prop}\label{pPossibility}
A neutral and separable influence measure $f$ satisfies Field Comparability if and only if it satisfies the accounting equation \eqref{eAccounting} up to a multiplicative constant.
\end{prop}

\propref{pPossibility} highlights that, for a reasonable measure to be field-comparable, the total influence in any given field must be normalized, i.e. proportional to its number of authors. This observation resonates with the source-normalization paradigm in bibliometrics \citep{waltman2019field}, which recommends to normalize citations originating from the same source (at a micro level), for instance weighing a citation by the number of references of the citing paper. Source-normalization is often thought to improve comparability across fields: if for instance two fields only differ by the number of references per paper, the procedure yields comparable results across the two fields. \propref{pPossibility} complements this literature by highlighting that Field Comparability can in fact be universally achieved (whatever differences across fields may exist),  by requiring a macro condition -- the accounting equation \eqref{eAccounting} -- on the influence measure. Finally, Example 2 illustrates that this macro condition can be satisfied by imposing a micro condition: that the total influence exerted on any given author is constant. This micro condition is similar in spirit to source-normalization, but it takes place at the level of authors rather than individual papers.

\section{Concluding Remarks}\label{sConclusion}

This study delineates the assumptions made when  using citation-counting indices to measure scientists' influence from bibliographic data. As these assumptions are quite restrictive,  I now discuss possible ways to leverage bibliographic data to address these limitations.
\medskip

\noindent \textbf{Intellectual debt and indirect influence.} We have seen above that the intellectual debt measure from Example 2 satisfies Field Comparability. We explore how to adapt the measure to account for indirect influence, while preserving Field Comparability. The measure from Example 2 can be re-written as follows:
\begin{equation} \label{eID}
f_a(d)  = \sum_{b\neq a } \dfrac{\sum_{p\in\mathcal{P}_a}\sum_{q\in \mathcal{P}_b}g(p,q) }{\sum_{p\in \mathcal{P}\backslash\mathcal{P}_b}\sum_{q\in \mathcal{P}_b}g(p,q)},
\end{equation}
where for any $p,q\in \mathcal{P}$, the influence of $p$ on $q$ is defined by $g(p,q)=g_1(p,q):=\frac{1}{R_q}n(p,q)$. Let $\delta\in(0,1)$ be a discount factor reflecting the relative importance of indirect versus direct citations. Consider now the overall influence of $p$ on $q$ defined by $g^\delta(p,q)=(1-\delta)\sum_{k=1}^{+\infty}\delta^{k-1}g_k(p,q)$, where $g_k$ is defined inductively by $ g_{k+1}(p,q)=\sum_{r\in\mathcal{P}}g_{k}(p,r)g_1(r,q)$. The measure $g^\delta(p,q)$ aggregates the influence of $p$ and $q$ inferred from both direct and indirect citations.\footnote{\citet{Klosik} similarly suggest to aggregate indirect influences order of different orders through a discounted sum, for the measurement of the impact of a single paper.} By using $g=g^\delta$ in \eqref{eID}, we obtain an influence measure $f^\delta$ which accounts for indirect influence, and also satisfies Field Comparability, as the accounting equation \eqref{eAccounting} is preserved.\footnote{Note that $f^\delta$ does not satisfy Reference Independence, but it can be easily adapted to do so. For a detailed discussion of the construction of $f^\delta$, see a previous version of the paper \citep{mace2017measuring}, where each step of the construction was axiomatized.}

\medskip

\noindent\textbf{Real databases.} In order to ease the exposition of the paper, I introduced a stylized setting with single-authored papers and no self-citations. I briefly discuss these assumptions here.

In modern research, a growing number of papers are written by a group of authors, rather than by a single author \citep{Wuchty,Hamermesh}. %\footnote{This trend has been documented for science in general by \citet{Wuchty} and for economics in particular by \citet{Hamermesh}.} 
It is therefore important to adapt influence measures to databases with this feature, and this can be done flexibly. Suppose that for each paper $p\in \mathcal{P}$, we have a distribution of weights $\omega_p=(\omega_p^a)_{a \in \mathcal{A}}$, reflecting the various contributions of authors in $\mathcal{A}$ to $p$, and such that $\sum_{a\in \mathcal{A}}\omega_p^a=1$. For instance, for each co-author $a$ of $p$, $\omega_p^a$ could be defined as the inverse of the number of co-authors, while $\omega_p^b=0$ for all other authors $b\in \mathcal{A}$ \citep{Radicchi2009}.\footnote{Note that alternative weights depending on the ordering of co-authors may be more appropriate in the natural sciences. Yet another possibility would be to apply an endogenous sharing rule, for which the relative share of a co-author on a paper depends on her overall influence in the database \citep{flores2020teamwork}.} Then, the intellectual debt measure $f$ can be adjusted by multiplying each term of the form $g(p,q)$ with $(p,q)\in \mathcal{P}_a\times  \mathcal{P}_b$ by $\omega_p^a\omega_q^b$.  This adjustment preserves the property of Field Comparability.

%With such a natural modification, the property of Field Comparability is preserved.%\footnote{Note that in the proposed modification, the shares of co-authors of a paper are exogenous. Alternatively, the influence index could be combined with endogenous sharing rules, for which the relative share of a co-author on a paper depends on its overall influence in the database \citep{flores2020teamwork}.} %Note also that if the shares of a given paper did not sum up to one, for instance because co-authorship was not accounted for, then the property of field comparability would be lost, as a field with a higher number of authors per paper would also have a higher average influence index \citep{Waltman2015}.

%ADD FN ABOUT AUTHORS DISAMBIGUATION

The model also neglected self-citations, which are pervasive in real bibliographic databases. This issue can be dealt with easily by erasing any citation for which the same author appears both in the cited and citing articles. Finally, I have assumed that a database contains an accurate register of authors for each paper. In practice, authorship could be difficult to retrieve for some articles. One way to deal with this issue is to restrict the database to papers written by authors from a trusted resister, such as ORCID, or RePEc in economics.

%To prevent from any potential manipulation, we may also tweak the measure, so that it satisfies the appealing property of Reference Independence. For that, we may define:
%$$\forall p,q\in \mathcal{P}, \forall k\geq 1, \qquad g'_k(p,q)=\sum_{\begin{array}{ll}  r_0=p,\ldots ,r_k=q\\
% \forall k'\geq 1, \omega^a_p>0 \Rightarrow \omega^a_{r_{k'}}=0 \end{array}}
% \prod_{k'=0}^{k-1}g_1(r_{k'},r_{k'+1}).$$
% The corresponding measure $I'_\delta$ will satisfy Reference Independence and Field Comparability will be preserved.
\medskip

\noindent\textbf{Going forward.} The analysis has been confined to the measurement of the influence at the level of individual scholars. This setting is empirically relevant as bibliographic data are often used to compute individual measures, although these data are arguably too sparse to actually reflect the quality of scientific works. Influence measures are presumably more meaningful for larger entities such as journals, departments, universities or geographic areas.\footnote{See for instance \cite{checchi2021have}. Yet, using such indicators for public policy remains vulnerable to manipulation, see for instance \cite{seeber2019self}.} The limitations of citation-counting indices extend to these settings as well, and the alternative measures discussed in the paper could be adapted for such endeavors.

Another promising avenue for further research concerns the vulnerability of influence measures to potential manipulations by a scientist or a group of scientists. Consider for instance a group of scientists that can manufacture papers and citations between those papers at no (or very little) cost. Such a group could reach any influence level under a citation-counting index, but its leverage would be more limited under a field-comparable measure, such as the intellectual debt measure, as its overall impact is bounded by the size of the group.

\newpage

\singlespacing

\bibliographystyle{ecta}\bibliography{Influence}

\newpage

\section{Appendix}

\subsection{Proof of \theoremref{tMain}}
We show a first lemma, establishing that Author Anonymity extends to the case where the number of authors change when we reallocate authors to papers.

\begin{sloppypar}
\begin{lemma}\label{lEAA}
If a neutral and separable influence measure satisfies Author Anonymity, then it satisfies Extended Author Anonymity: for any databases $d=(\mathcal{A},\mathcal{P}_{\mathcal{A}},n)$ and  $d'=(\mathcal{A}',\mathcal{P}'_{\mathcal{A}'},n)$ in $\mathbb{D}$ such that $a\in \mathcal{A}\cap \mathcal{A}'$ and $\mathcal{P}=\mathcal{P}'$, we have $f_a(d')=f_a(d)$.
\end{lemma}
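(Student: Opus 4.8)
The plan is to prove the statement first in the special case where $d$ and $d'$ have the same number of authors, and then to reduce the general case to this base case by padding both databases with auxiliary disjoint databases chosen so as to equalize the author counts. Throughout I read the hypotheses as also requiring $\mathcal{P}_a=\mathcal{P}'_a$, the natural counterpart of Author Anonymity (otherwise the claim fails: an author owning different papers would generally have different influence). The only genuine difference allowed between $d$ and $d'$ is then the number of \emph{other} authors and the way the papers in $\mathcal{P}\setminus\mathcal{P}_a$ are distributed among them.

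\emph{Base case} ($|\mathcal{A}|=|\mathcal{A}'|$). Pick any bijection $\pi:\mathcal{A}'\to\mathcal{A}$ with $\pi(a)=a$, which exists since the author sets have equal size. Applying Neutrality to $d'$ with this $\pi$ and the identity bijection on papers produces a database $d^\ast=(\mathcal{A},\mathcal{P}^\ast_{\mathcal{A}},n)$ with $I[d^\ast](a)=I[d'](a)$, the same paper set $\mathcal{P}$, the same network $n$, and $\mathcal{P}^\ast_a=\mathcal{P}'_a=\mathcal{P}_a$. Now $d^\ast$ and $d$ share the author set $\mathcal{A}$, the papers, the network, and the paper list of $a$, differing only in the assignment of the remaining papers; Author Anonymity then yields $I[d^\ast](a)=I[d](a)$, hence $I[d'](a)=I[d](a)$. (Membership in $\mathbb{D}$ is preserved under relabeling, so the axioms apply.)

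\emph{General case.} Assume without loss of generality $k:=|\mathcal{A}|\le k':=|\mathcal{A}'|$. I would construct a single auxiliary paper set $Q$ with a single network $n_Q$ on it, disjoint from $\mathcal{P}$, that admits two valid $\mathbb{D}$-partitions into authors: one, $e_d$, using $k'-k+2$ authors, and another, $e_{d'}$, using $2$ authors, both sharing the same papers $Q$ and network $n_Q$. Then $d\oplus e_d$ and $d'\oplus e_{d'}$ have the same papers $\mathcal{P}\sqcup Q$, the same network $n\oplus n_Q$, the same paper list $\mathcal{P}_a$ for author $a$, and the same number of authors $k'+2$. The base case applies to give $I[d\oplus e_d](a)=I[d'\oplus e_{d'}](a)$, while Separability gives $I[d\oplus e_d](a)=I[d](a)$ and $I[d'\oplus e_{d'}](a)=I[d'](a)$; chaining these equalities proves the lemma.

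The main obstacle is exactly this last construction: Author Anonymity keeps the author set fixed, and Separability adds papers together with authors, so neither axiom alone can change the number of authors sitting on a fixed paper set. The device that breaks the impasse is a gadget whose paper set and network are fixed but whose number of authors is flexible. A clean choice for $(Q,n_Q)$ is a directed cycle on $L:=2(k'-k+2)$ papers, each paper referencing its successor. An author partition of $Q$ is then a coloring of the underlying cycle $C_L$, and it lies in $\mathbb{D}$ precisely when adjacent papers get different colors (this rules out self-citations and guarantees that every author cites another). Since $C_L$ is an even cycle, it admits a proper coloring using exactly $c$ colors for every $c$ with $2\le c\le L$ (start from the alternating $2$-coloring and recolor vertices to fresh colors one at a time), so the two required author counts $2$ and $k'-k+2$ are realizable on the same $(Q,n_Q)$. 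The only remaining verifications are routine: that $d\oplus e\in\mathbb{D}$ for disjoint $\mathbb{D}$-databases (already noted in the text) and that relabeling preserves $\mathbb{D}$-membership.
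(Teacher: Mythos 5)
Your proposal is correct and follows essentially the same strategy as the paper's proof: reduce to the equal-author-count case (settled by Neutrality plus one application of Author Anonymity), then equalize the counts by padding both databases, via Separability, with auxiliary databases whose paper set and citation network are identical but whose partition into authors is flexible. The only differences are cosmetic --- the paper uses a fixed three-paper gadget (admitting both a $3$-author and a $2$-author partition) and closes the gap by induction, adding one author at a time, whereas your even-cycle gadget with its coloring argument closes the whole gap in one step; also, your added hypothesis $\mathcal{P}_a=\mathcal{P}'_a$ is indeed the intended (and necessary) reading of the lemma, and is exactly what the paper's own application of Author Anonymity presupposes.
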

\end{sloppypar}

\begin{proof}
For $\#\mathcal{A}=\#\mathcal{A}'$, the result is a direct consequence of neutrality and Author Anonymity. To prove the lemma, it is sufficient to prove that it holds for $\#\mathcal{A}'=1+\#\mathcal{A}$ (the lemma then follows by induction). By neutrality, we may even assume that $\mathcal{A}\subset \mathcal{A}'$, so that we have $\mathcal{A}'=\mathcal{A}\cup \{u\}$, with $u\notin \mathcal{A}$. The strategy of the proof consists in adding an auxiliary database $d_{aux}$ to $d$, and in applying Author Anonymity to $d\oplus d_{aux}$.

\begin{figure}[!h]
    \centering
    \begin{subfigure}[t]{\textwidth}
        \centering
        \begin{center}
\begin{tikzpicture}[scale=.5]
\draw[rounded corners,dashed, color=gray] (-10,-2.5) rectangle (14,8);
\draw[rounded corners,dashed, color=MyRed] (-8,-1) rectangle (0,7);
\draw[color=MyRed] (-4,-1.8) node{\large $d=(\mathcal{A},\mathcal{P}_{\mathcal{A}},n)$};

\draw[rounded corners,dashed, color=MyRed] (4,-1) rectangle (12,7);
\draw[color=MyRed] (8,-1.8) node{\large $d_{aux}$};

\draw[rounded corners,fill=blue!05] (5,0) rectangle (7,2);
\draw[color=MyBlue] (6,-0.5) node{\large $\bf{v}$};
\node[draw,circle, line width=.3mm, minimum size=.7cm] (Pv) at (6,1){};

\draw[rounded corners,fill=blue!05] (5,4) rectangle (7,6);
\draw[color=MyBlue] (6,6.5) node{\large $\bf{u}$};
\node[draw,circle, line width=.3mm, minimum size=.7cm] (Pu) at (6,5){};

\draw[rounded corners,fill=blue!05] (9,2) rectangle (11,4);
\draw[color=MyBlue] (10,1.5) node{\large $\bf{w}$};
\node[draw,circle, line width=.3mm, minimum size=.7cm] (Pw) at (10,3){};

\draw[->,>=latex, line width=.3mm] (Pu) to[bend left] (Pw);
\draw[->,>=latex, line width=.3mm] (Pv) to[bend right] (Pw);
\draw[->,>=latex, line width=.3mm] (Pw) to[bend right] (Pv);
\end{tikzpicture}
\end{center}
        \caption{Database $d\oplus d_{aux}$} \label{fLemma1.1}
    \end{subfigure}

    \vspace{.5cm}
    \begin{subfigure}[t]{\textwidth}
    \centering
    \begin{center}
\begin{tikzpicture}[scale=.5]
\draw[rounded corners,dashed, color=gray] (-10,-2.5) rectangle (14,8);
\draw[rounded corners,dashed, color=MyRed] (-8,-1) rectangle (0,7);
\draw[color=MyRed] (-4,-1.8) node{\large $d'=(\mathcal{A}\cup\{u\},\mathcal{P}'_{\mathcal{A}'},n)$};

\draw[rounded corners,dashed, color=MyRed] (4,-1) rectangle (12,7);
\draw[color=MyRed] (8,-1.8) node{\large $d'_{aux}$};

\draw[rounded corners,fill=blue!05] (5,0) rectangle (7,6);
\draw[color=MyBlue] (6,-0.5) node{\large $\bf{v}$};
\node[draw,circle, line width=.3mm, minimum size=.7cm] (Pv) at (6,1){};
\node[draw,circle, line width=.3mm, minimum size=.7cm] (Pu) at (6,5){};

\draw[rounded corners,fill=blue!05] (9,2) rectangle (11,4);
\draw[color=MyBlue] (10,1.5) node{\large $\bf{w}$};
\node[draw,circle, line width=.3mm, minimum size=.7cm] (Pw) at (10,3){};

\draw[->,>=latex, line width=.3mm] (Pu) to[bend left] (Pw);
\draw[->,>=latex, line width=.3mm] (Pv) to[bend right] (Pw);
\draw[->,>=latex, line width=.3mm] (Pw) to[bend right] (Pv);
\end{tikzpicture}
\end{center}
        \caption{Database $d'\oplus d'_{aux}$} \label{fLemma1.2}
    \end{subfigure}
    \caption{Databases in the proof of \lemmaref{lEAA}} \label{fLemma1}
\end{figure}

\FloatBarrier

The auxiliary database $d_{aux}$ is represented in \figref{fLemma1.1}. It contains three authors $u,v,w$, and each author is represented by a rounded square. Each author has a single paper, represented by a circle. Each citation is represented by an arrow (we can read that $u$'s paper cites $w$'s paper and that $v$'s and $w$'s papers cite each other). By Author Anonymity, we can reallocate authors to papers, while keeping the set of authors, the set of papers and the network fixed. We obtain the database represented in \figref{fLemma1.2}.

In the database $d'_{aux}$, there are only two authors $v$ and $w$, and $v$ has written two papers. As we have $d_{aux},d'_{aux}\in \mathbb{D}$, we can write, using Separability and Author Anonymity:
$$f_a(d)=f_a(d\oplus d_{aux})= f_a(d'\oplus d'_{aux})=f_a(d').$$
This concludes the proof of the lemma.
\end{proof}

We now prove \theoremref{tMain}.

\begin{proof}

It is clear that any citation-counting index is neutral and satisfies the five axioms. Conversely, let $f$ be a neutral influence measure satisfying the five axioms. Let $d=(\mathcal{A}, \mathcal{P}_{\mathcal{A}},n)$ be a database in $\mathbb{D}$, and let $a\in \mathcal{A}$. We will show that the influence of $a$ in $d$, $f_a(d)$, equals the influence of $a$ in any database $d'=(\mathcal{A}', \mathcal{P}'_{\mathcal{A}'},n')$ such that:
\begin{itemize}
	\item $A'=\{a,b\}$
	\item $\mathcal{P}_a'=\mathcal{P}_a$ and  $P'_b=\sum_{p\in \mathcal{P}_a}C_p$ ($b$ has written a number of papers equal to the total number of citations of $a$ in $d$)
	\item $\forall p\in \mathcal{P}_a$, $C'_p=C_p$ (each of $a$'s papers keeps the same number of citations), while $\forall q\in \mathcal{P}'_b$, $R'_q=1$ (each of $b$'s papers has a unique reference)
	\item $\exists (p,q) \in \mathcal{P}'_b\times \mathcal{P}_a',\quad n'(p,q)=1$ (there is at least a citation from $a$ to $b$).
\end{itemize}
Note that the last line is required to have $d'\in \mathbb{D}$. Since $f$ satisfies Reference Independence, the identities of the papers from $a$ citing $b$, as well as these of the papers cited by $a$, should not matter for the score of $a$. As $f$ is neutral, the names of the papers in $\mathcal{P}'_b$ should  not matter for the score of $a$ either. Therefore, all  databases of the form of $d'$ give the same score to $a$.

Let $b\in \mathcal{A}$ be an author citing $a$ in the database $d$: there exists a paper $p_b\in \mathcal{P}_b$ with $\mathcal{R}_{p_b}\cap \mathcal{P}_a\neq \emptyset$. Let us first assume that $p_b$ is the sole paper from $b$, that is, $\mathcal{P}_b=\{p_b\}$%, and that there is no citation from $a$ to $b$
. The essential steps of the proof can indeed be presented under this assumption, and we will argue later that they can be generalized. We also illustrate the main steps of the proof on \figref{fTheorem1}. 

\begin{figure}[!h]
    \centering
    \begin{subfigure}[t]{.45\textwidth}
        \centering
		\begin{center}
\begin{tikzpicture}[scale=.5]
\draw[rounded corners,fill=blue!05] (1,0) rectangle (3,8);
\draw[color=MyBlue] (2,-0.5) node{\large $\bf{a}$};
\node[draw,circle, line width=.3mm, minimum size=.7cm] (Pa1) at (2,2){};
\node[draw,circle, line width=.3mm, minimum size=.7cm]  (Pa2) at (2,6){};

\draw[rounded corners,fill=blue!05] (5,0) rectangle (7,2);
\draw[color=MyBlue] (6,-0.5) node{\large $\bf{b}$};
\node[draw,circle, line width=.3mm, minimum size=.7cm] (Pb) at (6,1){};
\node (Pbr) at (4,1){};
\node (Pbc) at (7.5,-1){};
\draw[->,>=latex, line width=.3mm] (Pb) to[bend right] (Pa1);
\draw[->,>=latex, line width=.3mm] (Pb) to[bend right] (Pa2);
\draw[->,>=latex, line width=.3mm] (Pb) to[bend right] (Pbr);
\draw[->,>=latex, line width=.3mm] (Pbc) to[bend right] (Pb);

\draw[rounded corners,fill=blue!05] (5,6) rectangle (11,8);
\draw[color=MyBlue] (8,8.5) node{\large $\bf{u}$};
\node[draw,circle, line width=.3mm, minimum size=.7cm] (Pu1) at (6,7){};
\node[draw,circle, line width=.3mm, minimum size=.7cm] (Pu2) at (10,7){};

\draw[rounded corners,fill=blue!05] (9,0) rectangle (11,2);
\draw[color=MyBlue] (10,-0.5) node{\large $\bf{v}$};
\node[draw,circle, line width=.3mm, minimum size=.7cm] (Pv) at (10,1){};
\draw[->,>=latex, line width=.3mm] (Pu2) to[bend right] (Pv);
\draw[->,>=latex, line width=.3mm] (Pv) to[bend right] (Pu2);
\end{tikzpicture}
\end{center}
        \caption{Step 1} \label{fStep1}
    	\end{subfigure}
    	\hfill
        \begin{subfigure}[t]{.45\textwidth}
        \centering
\begin{center}
\begin{tikzpicture}[scale=.5]
\draw[rounded corners,fill=blue!05] (1,0) rectangle (3,8);
\draw[color=MyBlue] (2,-0.5) node{\large $\bf{a}$};
\node[draw,circle, line width=.3mm, minimum size=.7cm] (Pa1) at (2,2){};
\node[draw,circle, line width=.3mm, minimum size=.7cm]  (Pa2) at (2,6){};

\draw[rounded corners,fill=blue!05] (5,0) rectangle (7,8);
\draw[color=MyBlue] (6,-0.5) node{\large $\bf{b}$};
\node[draw,circle, line width=.3mm, minimum size=.7cm] (Pb) at (6,1){};
\node (Pbr) at (4,1){};
\node (Pbc) at (7.5,-1){};
\draw[->,>=latex, line width=.3mm] (Pb) to[bend right] (Pa1);
\draw[->,>=latex, line width=.3mm] (Pb) to[bend right] (Pa2);
\draw[->,>=latex, line width=.3mm] (Pb) to[bend right] (Pbr);
\draw[->,>=latex, line width=.3mm] (Pbc) to[bend right] (Pb);

\draw[rounded corners,fill=blue!05] (9,6) rectangle (11,8);
\draw[color=MyBlue] (10,8.5) node{\large $\bf{u}$};
\node[draw,circle, line width=.3mm, minimum size=.7cm] (Pu1) at (6,7){};
\node[draw,circle, line width=.3mm, minimum size=.7cm] (Pu2) at (10,7){};

\draw[rounded corners,fill=blue!05] (9,0) rectangle (11,2);
\draw[color=MyBlue] (10,-0.5) node{\large $\bf{v}$};
\node[draw,circle, line width=.3mm, minimum size=.7cm] (Pv) at (10,1){};
\draw[->,>=latex, line width=.3mm] (Pu2) to[bend right] (Pv);
\draw[->,>=latex, line width=.3mm] (Pv) to[bend right] (Pu2);
\end{tikzpicture}
\end{center}
        \caption{Step 2} \label{fStep2}
   		\end{subfigure}
   		
   		\vspace{.5cm}
       	
        \begin{subfigure}[t]{.45\textwidth}
        \centering
\begin{center}
\begin{tikzpicture}[scale=.5]
\draw[rounded corners,fill=blue!05] (1,0) rectangle (3,8);
\draw[color=MyBlue] (2,-0.5) node{\large $\bf{a}$};
\node[draw,circle, line width=.3mm, minimum size=.7cm] (Pa1) at (2,2){};
\node[draw,circle, line width=.3mm, minimum size=.7cm]  (Pa2) at (2,6){};

\draw[rounded corners,fill=blue!05] (5,0) rectangle (7,8);
\draw[color=MyBlue] (6,-0.5) node{\large $\bf{b}$};
\node[draw,circle, line width=.3mm, minimum size=.7cm] (Pb) at (6,1){};
\node (Pbr) at (4,1){};
\node (Pbc) at (7.5,-1){};

\draw[rounded corners,fill=blue!05] (9,6) rectangle (11,8);
\draw[color=MyBlue] (10,8.5) node{\large $\bf{u}$};
\node[draw,circle, line width=.3mm, minimum size=.7cm] (Pu1) at (6,7){};
\node[draw,circle, line width=.3mm, minimum size=.7cm] (Pu2) at (10,7){};
\draw[->,>=latex, line width=.3mm] (Pu1) to[bend right] (Pa1);
\draw[->,>=latex, line width=.3mm] (Pu1) to[bend right] (Pa2);
\draw[->,>=latex, line width=.3mm] (Pu1) to[bend right] (Pbr);
\draw[->,>=latex, line width=.3mm] (Pbc) to[bend right] (Pb);

\draw[rounded corners,fill=blue!05] (9,0) rectangle (11,2);
\draw[color=MyBlue] (10,-0.5) node{\large $\bf{v}$};
\node[draw,circle, line width=.3mm, minimum size=.7cm] (Pv) at (10,1){};
\draw[->,>=latex, line width=.3mm] (Pu2) to[bend right] (Pv);
\draw[->,>=latex, line width=.3mm] (Pv) to[bend right] (Pu2);
\end{tikzpicture}
\end{center}
        \caption{Step 3} \label{fStep3}
   		\end{subfigure}
	\hfill
    \begin{subfigure}[t]{.45\textwidth}
    \centering
\begin{center}
\begin{tikzpicture}[scale=.5]
\draw[rounded corners,fill=blue!05] (1,0) rectangle (3,8);
\draw[color=MyBlue] (2,-0.5) node{\large $\bf{a}$};
\node[draw,circle, line width=.3mm, minimum size=.7cm] (Pa1) at (2,2){};
\node[draw,circle, line width=.3mm, minimum size=.7cm]  (Pa2) at (2,6){};

\draw[rounded corners,fill=blue!05] (5,0) rectangle (7,8);
\draw[color=MyBlue] (6,-0.5) node{\large $\bf{b}$};
\node[draw,circle, line width=.3mm, minimum size=.7cm] (Pb) at (6,1){};
\node[draw,circle, line width=.3mm, minimum size=.7cm] (Pb2) at (6,2.5){};
\node[draw,circle, line width=.3mm, minimum size=.7cm] (Pb3) at (6,5.5){};
\node (Pbr) at (4,1){};
\node (Pbc) at (7.5,-1){};

\draw[rounded corners,fill=blue!05] (9,6) rectangle (11,8);
\draw[color=MyBlue] (10,8.5) node{\large $\bf{u}$};
\node[draw,circle, line width=.3mm, minimum size=.7cm] (Pu1) at (6,7){};
\node[draw,circle, line width=.3mm, minimum size=.7cm] (Pu2) at (10,7){};

\draw[->,>=latex, line width=.3mm] (Pu1) to[bend right] (Pa2);
\draw[->,>=latex, line width=.3mm] (Pb3) to[bend right] (Pa1);
\draw[->,>=latex, line width=.3mm] (Pb2) to[bend right] (Pbr);
\draw[->,>=latex, line width=.3mm] (Pbc) to[bend right] (Pb);

\draw[rounded corners,fill=blue!05] (9,0) rectangle (11,2);
\draw[color=MyBlue] (10,-0.5) node{\large $\bf{v}$};
\node[draw,circle, line width=.3mm, minimum size=.7cm] (Pv) at (10,1){};
\draw[->,>=latex, line width=.3mm] (Pu2) to[bend right] (Pv);
\draw[->,>=latex, line width=.3mm] (Pv) to[bend right] (Pu2);
\end{tikzpicture}
\end{center}
        \caption{Step 4} \label{fStep4}
    \end{subfigure}
    
	\vspace{.5cm}    
    
        \begin{subfigure}[t]{.45\textwidth}
    \centering
\begin{center}
\begin{tikzpicture}[scale=.5]

\draw[rounded corners,fill=blue!05] (1,0) rectangle (3,8);
\draw[color=MyBlue] (2,-0.5) node{\large $\bf{a}$};
\node[draw,circle, line width=.3mm, minimum size=.7cm] (Pa1) at (2,2){};
\node[draw,circle, line width=.3mm, minimum size=.7cm]  (Pa2) at (2,6){};

\draw[rounded corners,fill=blue!05] (5,4.5) rectangle (7,8);
\draw[color=MyBlue] (6,8.5) node{\large $\bf{b}$};

\draw[rounded corners,fill=blue!05] (5,0) rectangle (11,3.5);
\draw[color=MyBlue] (8,-0.5) node{\large $\bf{v}$}; 

\node[draw,circle, line width=.3mm, minimum size=.7cm] (Pb) at (6,1){};
\node[draw,circle, line width=.3mm, minimum size=.7cm] (Pb2) at (6,2.5){};
\node[draw,circle, line width=.3mm, minimum size=.7cm] (Pb3) at (6,5.5){};
\node (Pbr) at (4,1){};
\node (Pbc) at (7.5,-1){};

\draw[rounded corners,fill=blue!05] (9,6) rectangle (11,8);
\draw[color=MyBlue] (10,8.5) node{\large $\bf{u}$};
\node[draw,circle, line width=.3mm, minimum size=.7cm] (Pu1) at (6,7){};
\node[draw,circle, line width=.3mm, minimum size=.7cm] (Pu2) at (10,7){};

\draw[->,>=latex, line width=.3mm] (Pu1) to[bend right] (Pa2);
\draw[->,>=latex, line width=.3mm] (Pb3) to[bend right] (Pa1);
\draw[->,>=latex, line width=.3mm] (Pb2) to[bend right] (Pbr);
\draw[->,>=latex, line width=.3mm] (Pbc) to[bend right] (Pb);

\node[draw,circle, line width=.3mm, minimum size=.7cm] (Pv) at (10,1){};
\draw[->,>=latex, line width=.3mm] (Pu2) to[bend right] (Pv);
\draw[->,>=latex, line width=.3mm] (Pv) to[bend right] (Pu2);
\end{tikzpicture}
\end{center}
        \caption{Step 5} \label{fStep5}
    \end{subfigure}
    \caption{Steps for the proof of \theoremref{tMain}} \label{fTheorem1}
\end{figure}

\FloatBarrier

\textbf{Step 1.} We add to $d$ an auxiliary database with two authors, $u$ and $v$, as represented on \figref{fStep1}. Formally, $\mathcal{A}^{aux}=\{u,v\}$ with $\mathcal{P}^{aux}_u=\{p_u,p_u'\}$ and  $\mathcal{P}^{aux}_v=\{p_v\}$ and $n^{aux}(p_u',p_v)=n^{aux}(p_v,p_u')=1$. We know by Separability that this does not affect the score of $a$.

\textbf{Step 2.} We modify the database to transfer the paper with no citations and no references from $u$ to $b$, as represented on \figref{fStep2}. We now have $\mathcal{P}^2_b=\{p_b,p_u\}$ and $\mathcal{P}^2_u=\{p_u'\}$, where use the notation $\mathcal{P}^m_x$ (resp. $\mathcal{A}^m$ and $n^m$) for the set of papers from author $x$ in the joint database at step $m$. By Author Anonymity, this does not affect the score of $a$.

\textbf{Step 3.} We modify the network so that all the citations from $b$ now originate from the paper $p_u$ (rather than from $p_b$), as represented on \figref{fStep3}. Formally, $n^3(p,p_b)=0$ and $n^3(p,p_u)=n^2(p,p_b)$ for any $p$; and $n^3(p,q)=n^2(p,q)$ whenever $q\neq p_b,p_u$. By Citation Anonymity, this does not affect the score of $a$ (since both $p_b$ and $p_u$ belong to $\mathcal{P}_b^2=\mathcal{P}_b^3$).

\textbf{Step 4.} We split the (uncited) paper $p_u$ from $b$ into $k+1$ papers, where $k$ is the number of references in $p_u$ to papers written by $a$. Each of the first $k$ papers cites one (and only one) paper from $a$, while the remaining paper keeps all references to papers not written by $a$,  as represented on \figref{fStep4}. Formally, $k=\# (\mathcal{R}^3_{p_u}\cap \mathcal{P}^3_a)$ and  $\mathcal{P}_b^4=\{p^{(1)}, \ldots, p^{(k)}, p^{(k+1)}, p_b\}$ with $\cup_{i=1}^k\mathcal{R}^4_{p^{i}} = \mathcal{R}^3_{p_u}\cap \mathcal{P}^3_a$ and $\forall i=1\ldots k$, $R^4_{p^{i}}=1$ and $\mathcal{R}^4_{p^{k+1}} = \mathcal{R}^3_{p_u}\backslash \mathcal{P}^3_a$. Since the split paper $p_u$ had no citation, we can apply Splitting, and we obtain that the score of $a$ remains unaffected.

\textbf{Step 5.} We isolate the papers from $b$ citing $a$, by having $v$ absorbing the two other papers written by $b$, namely $p^{(k+1)}$ and $p_b$,  as represented on \figref{fStep5}. Formally, $\mathcal{P}_b^5=\{p^{(1)}, \ldots, p^{(k)}\}$ and $\mathcal{P}_v^5=\{p^{(k+1)}, p_b,p_v\}$. By Author Anonymity, this does not affect the score of $a$.

We have assumed that there was initially a single paper from $b$ citing $a$. Observe now that the construction can be iterated if there are multiple papers from $b$ citing $a$. In the resulting database, there are as many papers from $b$ as the initial number of citations from $b$ to $a$ (each has one references), and there are no other citations either from or to papers written by $b$. 

If there are initially multiple authors citing $a$, we can iterate the construction, and then merge all authors citing at least one of $a$'s papers to a single author (we will call it $b$). By application of Extended Author Anonymity (which holds by \lemmaref{lEAA}), this does not affect the score of $a$.

Finally, we can cut all references from $a$ to authors different from $b$ and add one citation from (any paper of) $a$ to (any paper of) $b$. By Reference Independence, this does not affect the score of $a$.

To conclude, we have obtained a database whose papers can be partitioned in three parts: the papers written by $a$, the papers citing $a$, all being written by the same author $b$, and all the other papers. In this database, there is no citation from either $a$ or $b$ to the remaining authors, nor from the remaining authors to either $a$ or $b$. Moreover both the sub-database containing papers from $a$ and $b$, and the remaining sub-database, belong to the set $\mathbb{D}$, they thus form two disjoint fields. By application of Separability, we obtain that the score of $a$ is the same as in the former sub-database, noted $d'$, which has all the properties listed at the beginning of the proof. 

%Note that, by Reference Independence, the score of $a$ does not depend on the identity of the papers in the citation from $a$ to $b$. 
To conclude, we obtained that if two databases are such that $m_a[d]=m_a[d'']$, there exists a database $d'$ (obtained from our construction) such that $f_a(d)=f_a(d')=f_a(d'')$. Thus, $f$ is a citation-counting index.\medskip

\textbf{Independence of the axioms}. For each axiom, we propose a measure $f$ that satisfies all axioms but this one. 
\begin{itemize}
	\item Separability. The score of an author $a$ equals the ratio between the number of citations received by $a$ and the total number of references of authors other than $a$:
	$$f_a(d)=\dfrac{\sum_{p\in \mathcal{P}_a} C_p}{\sum_{p\in \mathcal{P}\backslash \mathcal{P}_a} R_p}.$$
	\item Reference Independence. The score of each author equals her number of references: $f_a(d)=\sum_{p\in \mathcal{P}_a}R_p$.
	\item Splitting. The score of each author is her fractional citation count (see \footnoteref{fnFCC}).
	\item Citation Anonymity. The score of an author $a$ is the total number of citations that papers citing a paper from $a$ receive, and that are not issued by $a$:
	$$f_a(d)=\sum_{q\in \cup_{p\in\mathcal{P}_a}\mathcal{C}_p}\# \left(\mathcal{C}_q\cap (\mathcal{P}\backslash \mathcal{P}_a)\right).$$
    \item Author Anonymity.  The score of $a$ is the number of authors citing $a$:
    $$f_a(d)=\#\{b\in \mathcal{A},\ (\cup_{q\in \mathcal{P}_b}\mathcal{R}_q )\cap \mathcal{P}_a \neq \emptyset  \}.$$ 
\end{itemize}
\end{proof}

\subsection{Proof of \propref{pImpossibility}}

\begin{proof}
Let $f$ be a non-degenerate and separable influence measure satisfying Null Author, Author Anonymity and Field Comparability. As $f$ is non-degenerate, there is $d^0=(\mathcal{A}^0,\mathcal{P}^0_{\mathcal{A}^0},n^0)$ and $a^0\in \mathcal{A}^0$ such that $f_{a^0}(d^0)>0$. For any $d=(\mathcal{A},\mathcal{P}_{\mathcal{A}},n)$ such that $\mathcal{A}^0\cap \mathcal{A}=\emptyset$ and $\mathcal{P}^0\cap \mathcal{P}=\emptyset$, we obtain by Separability and Field Comparability that:
$$\dfrac{1}{A^0}\sum_{a\in \mathcal{A}^0}f_a(d^0) = \dfrac{1}{A^0}\sum_{a\in \mathcal{A}^0}f_a(d^0\oplus d) =  \dfrac{1}{A}\sum_{a\in \mathcal{A}}f_a(d^0\oplus d) = \dfrac{1}{A}\sum_{a\in \mathcal{A}}f_a( d).$$
It follows that  $\sum_{a\in \mathcal{A}}f_a( d)>0$.

Consider now the database $d\in \mathbb{D}$ represented in \figref{fProp1.1}, where we assume that the names of authors and papers from $d$ do not intersect with those from $d^0$.

\begin{figure}[!h]
    \centering
    \begin{subfigure}[t]{\textwidth}
        \centering
\begin{center}
\begin{tikzpicture}[scale=.5]
\draw[rounded corners,dashed,color=gray] (-2,-2) rectangle (24,8);
\draw[rounded corners,fill=blue!05] (0,2) rectangle (2,4);
\draw[color=MyBlue] (1,1.5) node{\large $\bf{a}$};
\node[draw,circle, line width=.3mm, minimum size=.7cm] (Pa) at (1,3){};

\draw[rounded corners,fill=blue!05] (4,2) rectangle (6,4);
\draw[color=MyBlue] (5,1.5) node{\large $\bf{b}$};
\node[draw,circle, line width=.3mm, minimum size=.7cm] (Pb) at (5,3){};

\draw[rounded corners,fill=blue!05] (8,4) rectangle (10,6);
\draw[color=MyBlue] (9,6.5) node{\large $\bf{e}$};
\node[draw,circle, line width=.3mm, minimum size=.7cm] (Pc) at (9,5){};

\draw[rounded corners,fill=blue!05] (8,0) rectangle (10,2);
\draw[color=MyBlue] (9,-0.5) node{\large $\bf{c}$};
\node[draw,circle, line width=.3mm, minimum size=.7cm] (Pe) at (9,1){};

\draw[->,>=latex, line width=.3mm] (Pa) to[bend right] (Pb);
\draw[->,>=latex, line width=.3mm] (Pb) to[bend right] (Pa);
\draw[->,>=latex, line width=.3mm] (Pc) to[bend right] (Pb);
\draw[->,>=latex, line width=.3mm] (Pe) to[bend left] (Pb);

\draw[rounded corners,fill=blue!05] (12,0) rectangle (14,6);
\draw[color=MyBlue] (13,-0.5) node{\large $\bf{z}$};
\node[draw,circle, line width=.3mm, minimum size=.7cm] (Pz1) at (13,1){};
\node[draw,circle, line width=.3mm, minimum size=.7cm] (Pz2) at (13,5){};

\draw[rounded corners,fill=blue!05] (16,2) rectangle (18,4);
\draw[color=MyBlue] (17,1.5) node{\large $\bf{y}$};
\node[draw,circle, line width=.3mm, minimum size=.7cm] (Py) at (17,3){};

\draw[rounded corners,fill=blue!05] (20,2) rectangle (22,4);
\draw[color=MyBlue] (21,1.5) node{\large $\bf{x}$};
\node[draw,circle, line width=.3mm, minimum size=.7cm] (Px) at (21,3){};

\draw[->,>=latex, line width=.3mm] (Px) to[bend right] (Py);
\draw[->,>=latex, line width=.3mm] (Py) to[bend right] (Px);
\draw[->,>=latex, line width=.3mm] (Pz1) to[bend right] (Py);
\draw[->,>=latex, line width=.3mm] (Pz2) to[bend left] (Py);
\end{tikzpicture}
\end{center}
        \caption{Database $d$} \label{fProp1.1}
    \end{subfigure}

    \vspace{.5cm}
    \begin{subfigure}[t]{\textwidth}
    \centering
\begin{center}
\begin{tikzpicture}[scale=.5]
\draw[rounded corners,dashed,color=gray] (-2,-2) rectangle (24,8);
\draw[rounded corners,fill=blue!05] (0,2) rectangle (2,4);
\draw[color=MyBlue] (1,1.5) node{\large $\bf{a}$};
\node[draw,circle, line width=.3mm, minimum size=.7cm] (Pa) at (1,3){};

\draw[rounded corners,fill=blue!05] (4,2) rectangle (6,4);
\draw[color=MyBlue] (5,1.5) node{\large $\bf{b}$};
\node[draw,circle, line width=.3mm, minimum size=.7cm] (Pb) at (5,3){};

\draw[rounded corners,fill=blue!05] (8,0) rectangle (10,6);
\draw[color=MyBlue] (9,-0.5) node{\large $\bf{c}$};
\node[draw,circle, line width=.3mm, minimum size=.7cm] (Pc1) at (9,5){};
\node[draw,circle, line width=.3mm, minimum size=.7cm] (Pc2) at (9,1){};

\draw[->,>=latex, line width=.3mm] (Pa) to[bend right] (Pb);
\draw[->,>=latex, line width=.3mm] (Pb) to[bend right] (Pa);
\draw[->,>=latex, line width=.3mm] (Pc1) to[bend right] (Pb);
\draw[->,>=latex, line width=.3mm] (Pc2) to[bend left] (Pb);

\draw[rounded corners,fill=blue!05] (12,4) rectangle (14,6);
\draw[color=MyBlue] (13,6.5) node{\large $\bf{e}$};
\node[draw,circle, line width=.3mm, minimum size=.7cm] (Pe) at (13,5){};

\draw[rounded corners,fill=blue!05] (12,0) rectangle (14,2);
\draw[color=MyBlue] (13,-0.5) node{\large $\bf{z}$};
\node[draw,circle, line width=.3mm, minimum size=.7cm] (Pz) at (13,1){};

\draw[rounded corners,fill=blue!05] (16,2) rectangle (18,4);
\draw[color=MyBlue] (17,1.5) node{\large $\bf{y}$};
\node[draw,circle, line width=.3mm, minimum size=.7cm] (Py) at (17,3){};

\draw[rounded corners,fill=blue!05] (20,2) rectangle (22,4);
\draw[color=MyBlue] (21,1.5) node{\large $\bf{x}$};
\node[draw,circle, line width=.3mm, minimum size=.7cm] (Px) at (21,3){};

\draw[->,>=latex, line width=.3mm] (Px) to[bend right] (Py);
\draw[->,>=latex, line width=.3mm] (Py) to[bend right] (Px);
\draw[->,>=latex, line width=.3mm] (Pz) to[bend right] (Py);
\draw[->,>=latex, line width=.3mm] (Pe) to[bend left] (Py);
\end{tikzpicture}
\end{center}
        \caption{Database $d'$} \label{fProp1.2}
    \end{subfigure}
    \caption{Databases in the proof of \propref{pImpossibility}} \label{fProp1}
\end{figure}

\FloatBarrier

By Null Author, we must have $f_c(d)=f_e(d)=f_z(d)=0$. By Field Comparability, we have:
$$\dfrac{f_a(d)+f_b(d)}{4}=\dfrac{f_x(d)+f_y(d)}{3}.$$
Moreover, as $\sum_{a'\in \mathcal{A}}f_{a'}(d)>0$, we must have $f_a(d)+f_b(d)>f_x(d)+f_y(d)>0$. Now, by Author Anonymity, the scores of $a$, $b$, $x$ and $y$ must be the same in the database $d'$ represented in \figref{fProp1.2}, obtained by having $c$ taking a paper from $e$ and $e$ taking a paper from $z$.

By Field Comparability, we must also have:
$$\dfrac{f_a(d)+f_b(d)}{3}=\dfrac{f_x(d)+f_y(d)}{4}.$$
We thus obtain $f_a(d)+f_b(d)<f_x(d)+f_y(d)$, hence  a contradiction with the previous inequality.
\end{proof}

\subsection{Proof of \propref{pPossibility}}

\begin{proof}
Let $f$ be a neutral,  separable and field-comparable influence measure. Let $d^0=(\mathcal{A}^0,\mathcal{P}^0_{\mathcal{A}^0},n^0)\in \mathbb{D}$ be a database and note $\lambda=\frac{1}{A^0}\sum_{a\in \mathcal{A^0}}f_a(d^0)$. Let $d$ be any database in $\mathbb{D}$. By neutrality, we may assume that the authors and papers of $d$ do not intersect those of $d^0$. By Separability and Field Comparability, we obtain:
$$\lambda= \dfrac{1}{A^0}\sum_{a\in \mathcal{A}^0}f_a(d^0) = \dfrac{1}{A^0}\sum_{a\in \mathcal{A}^0}f_a(d^0\oplus d) =  \dfrac{1}{A}\sum_{a\in \mathcal{A}}f_a(d^0\oplus d) = \dfrac{1}{A}\sum_{a\in \mathcal{A}}f_a( d).$$
We have shown that there exists $\lambda\geq 0$ such that, for any $d\in\mathbb{D}$, $\sum_{a\in \mathcal{A}}f_a( d)= \lambda A$. That is, the measure $f$ satisfies equation \eqref{eAccounting} up to a multiplicative constant.
\end{proof}

\end{document}